\theoremstyle{definition} 
\newtheorem {theorem} {Theorem}
\newtheorem {lemma} {Lemma}
\newcommand{\fullversion}[2]{\ifthenelse{\boolean{fullversionflag}}{{#1}}{{#2}}}
\newcommand{\heading}[1]{\text{ }\newline\textbf{{#1:}}}
\newcommand{\num}{\#}
\newcommand{\kb}[1]{\left[#1\right]}
\newcommand{\trd}[1]{\left|\left| #1 \right| \right|}
\newcommand{\st}{\text{ } : \text{ }}
\newcommand{\Hmin}{H_\infty}
\newcommand{\leakEC}{\lambda_{EC}}
\newcommand{\up}[1]{^{({#1})}}
\newcommand{\wt}[1]{\widetilde{{#1}}}
\newcommand{\filter}{\mathcal{F}}
\newcommand{\samp}{\Psi}
\def\BibTeX{{\rm B\kern-.05em{\sc i\kern-.025em b}\kern-.08em
    T\kern-.1667em\lower.7ex\hbox{E}\kern-.125emX}}
\begin{document}

\title{Finite Key Security of the Extended B92 Protocol}

\author{\IEEEauthorblockN{Walter O. Krawec}
\IEEEauthorblockA{\textit{School of Computing} \\
\textit{University of Connecticut}\\
Storrs CT, USA \\
walter.krawec@uconn.edu}}

\maketitle

\begin{abstract}
In this paper, we derive a new proof of security for the Extended B92 QKD protocol.  We derive a general entropic uncertainty relation for QKD protocols with data filtering and rejection.  Our bound requires one to determine the size of a particular set derived from a classical sampling strategy.  Finally, we show how our methods can be used to readily prove security of the Extended B92 protocol, providing, to our knowledge, the first finite key proof of security for this protocol against general, coherent, attacks.
\end{abstract}

\begin{IEEEkeywords}
  Quantum Key Distribution, Quantum Cryptography, Quantum Information Theory
\end{IEEEkeywords}

\section{Introduction}

Quantum cryptography is a fascinating field, and holds several potential practical advantages, along with numerous, interesting, theoretical problems.  Unlike classical cryptography (including post-quantum cryptography) which typically, or necessarily in many cases, relies on computational assumptions for security, quantum key distribution (QKD), and other quantum cryptographic protocols, can be proven information theoretic secure.  Typically, the security proof of a quantum cryptographic protocol relies on bounding the quantum min entropy of a measurement, conditioned on an adversary's quantum side information, where this bound must be based only on observable statistics (e.g., the noise in the channel).  Much work goes into deriving entropy bounds for various scenarios, with one of the most prominent, and powerful, mathematical tools being \emph{entropic uncertainty relations} \cite{coles2017entropic,wehner2010entropic}.  For a general survey of QKD protocols, the reader is referred to \cite{pirandola2020advances,amer2021introduction}.

This paper investigates the so-called \emph{Extended B92} QKD protocol, introduced originally in \cite{lucamarini2009robust}.  This protocol takes the standard B92 protocol \cite{bennett1992quantum}, which encodes key-bits in the basis choice, as opposed to orthogonal states as is done in BB84, and adds two additional ``test'' states to determine a better bound on the fidelity of the channel.  The extended version of the protocol holds several advantages over the standard B92 system in countering the unambiguous state discrimination attack \cite{duvsek2006quantum,tamaki2009unconditional}.  While these protocols typically do not outperform decoy-state BB84, in practice they can lead to simpler implementations \cite{wang2026performance}.  This makes the study of alternative protocols, such as this, an important endeavor.  %and so the study of alternative protocols, such as this, is important in order to understand the trade-offs in cost versus key-rates.

In this work, we derive a new security proof to analyze the protocol's performance in the finite key scenario, against arbitrary, general, attacks.  While this protocol has been analyzed in the asymptotic scenario \cite{lucamarini2009robust}, and also a version of this protocol has been analyzed in the finite key scenario against \emph{collective attacks} only, in \cite{amer2020finite}, to our knowledge our work, here, is the first proof of security for this protocol in the finite key setting against general, coherent, attacks.  Though we restrict ourselves to loss-less channels and single-qubit sources, our methods may be suitably modified to handle other scenarios, as we comment on later.  Deriving a finite key proof of security to handle the single qubit case, is usually a first step towards analyzing more general quantum channels and sources. %Past security resutls only applied to the asymptotic case of infinite signals (CITE), or to finite key security against a weaker class of attacks, namely collective attacks (CITE) and required the collection of numerous statistics known as mismatched measurements (CITE).  

% protocols (or other quantum cryptographic protocols), reliant on a filtering process, whereby some, or many, rounds of communication are discarded. Such filtering adds complications to a security analysis, as Eve may influence which rounds are discarded.  We prove a general framework to bound secret key rates for a large class of such protocols.

Our work bounds the quantum min entropy directly, without relying on any approximation methods such as the asymptotic equipartition property (AEP) \cite{tomamichel2009fully} as required in prior work.  Interestingly, our proof methodology can be extended to a large class of QKD protocol, where parties reject data based on the outcome of a measurement, and may hold broader interest when analyzing other protocols that involve encoding classical information into non-orthogonal states.  Our proof technique makes use of a quantum sampling framework of Bouman and Fehr \cite{bouman2010sampling}, along with modified proof methods from sampling based entropic uncertainty relations \cite{yao2022quantum}. % As an application, we apply our method to the Extended B92 QKD protocol, originally introduced in \cite{lucamarini2009robust}.  To our knowledge this is the first proof of security for this protocol in the finite key setting against general, coherent, attacks.

%However, our method is very general and can apply to a large class of such protocol.  .  Our proof method may also yield additional entropic uncertainty relations which may be of independent interest.

%This paper started as a simply new proof of security for the Extended B92 QKD protocol introduced in (CITE).  However, while working on this proof, we realized that the resulting entropic uncertainty expression could be generalized to a larger class of QKD protocol where Alice and Bob apply filtering operations, causing some systems to be rejected based on a measurement result, and where the entropy must be bounded based on measurements of the unrejected portion.  We prove that QKD protocols of this nature can have their key-rate bounded by a function of an underlying \emph{classical} sampling strategy and the size of a particular set which is induced by the given strategy (a combinatorial problem in general).

In general, we consider the following scenario: First Eve creates an arbitrary state, sending $N$-qubits to Alice, and $N$-qubits to Bob, while keeping an entangled ancilla.  We do not assume any collective attack structure on the state. A test is performed by Alice and Bob, where they will measure some of the received qubits. This process results in measurement data and a post measured state.  After this, a filtering stage is performed, where Alice and Bob reject some of the remaining signals based on the outcome of some measurement.   Finally, the remaining systems are measured (those that were not rejected, and those which were not used for sampling) and a secret key is distilled.

Filtering like this is common in many QKD protocols.  One must bound the quantum min entropy of the conditional \emph{accepted} state, which may be lower than the entropy in the entire state before filtering, as parties may inadvertently reject signals that Eve had a lot of uncertainty on; Eve may also be able to influence which signals are rejected, giving her greater information in the conditional, accepted state.

Our main result, at a high level, is to show that the final secret key size, of this general protocol, is $\ell$-bits, where:
\begin{equation}
  \ell \approx n_0 \cdot c - \gamma(\mathcal{S}),
\end{equation}
where $n_0$ is the number of accepted signals, $c$ is a function of party measurements, and  $\gamma(\mathcal{S})$ is a function of the underlying \emph{classical} sampling strategy, used for testing the state. Our result is formalized in Theorem \ref{thm:result:main}.  While our result is general, we show how it can be applied to the Extended B92 protocol, in Section \ref{sec:application:b92}.  %This is, to our knowledge, the first time this protocol has been proven secure in the finite key setting against general, coherent, attacks.  Previous work only considered simplified versions of the protocol and collective attacks \cite{amer2020finite}.   To prove this, we use Bouman and Fehr's quantum sampling framework introduced in (CITE).  Our proof technique may be useful in other quantum cryptographic protocols involving data rejection and filtering operations of this nature.

%While our result is general, we demonstrate how it can be used by applying it to the extennded B92 protocol (CITE).  This protocol was introduced to counter the low loss tolerance of the original B92 protocol while still using non-orthogonal states for key encoding.

To our knowledge the only finite key security proof for the Extended B92 protocol was derived in \cite{amer2020finite}, however that proof assumed collective attacks and did not extend to general attacks.  It also made heavy use of mismatched measurements \cite{barnett1993eavesdropping,watanabe2008tomography}, which add to the sampling burden.  Our Theorem \ref{thm:result:main} works against arbitrary general attacks ``out of the box,'' and derives a bound on the quantum min entropy directly whereas prior work derived a bound on von Neummann entropy, and used approximation methods to promote that analysis to the finite key setting.  We show that not only is our proof, here, more general than prior work, but it also shows higher key rates are possible than previously thought for the Extended B92 protocol, in low-signal scenarios.

Overall, our contributions are to prove security of a general QKD protocol and apply our methods to the Extended B92 protocol.  We show that our results allow for much higher key generation rates for that protocol than previously thought, when the number of signals is low.  Our result also converges to the best-known asymptotic key-rate bound.  Our main result is applicable to other QKD and quantum cryptographic protocols, and our proof methodology may have broad application outside of this work, as it shows a novel method to apply Bouman and Fehr's results from \cite{bouman2010sampling}, to more complicated cryptographic protocols (in this case, those which involve discarding systems based on measurement outcomes which may be influenced by an adversary's control of the quantum channel).

\subsection{Preliminaries}\label{sec:introduction:prelim}

Given a word $q \in \{0,1\}^N$, and a subset $t \subset \{1, 2, \cdots, N\}$, we write $q_t$ to mean the substring of $q$ indexed by $t$ and $q_{-t}$ to mean the substring of $q$ indexed by the complement of $t$.  We write $q_i$ to mean the $i$'th bit of $q$.  Let $wt(q)$ be the Hamming weight of $q$, namely the number of times $1$ (one) appears in $q$, and $w(q) = \frac{1}{N}wt(q)$, which is the relative Hamming weight of $q$.  More generally, let $\num_j(q)$ be the number of times $j$ appears in $q$ for either $j=0,1$ (thus, $\num_1(q) = wt(q)$).

Let $\mathcal{M} = \{\ket{m_0}, \ket{m_1}\}$ be an orthonormal basis; then, for $i=0,1$, we write $\ket{i}^M$ to mean $\ket{m_i}$.  If the superscript is not specified (i.e., $\ket{i}$), we assume the computational $Z$ basis.  Given $q\in\{0,1\}^N$, we write $\ket{q}^M$ to mean $\ket{q_1}^M\cdots \ket{q_N}^M = \ket{m_{q_1}}\cdots \ket{m_{q_N}}$.

Given a density operator $\rho_{AB}$ acting on some Hilbert space $\mathcal{H}_A\otimes\mathcal{H}_B$, we will write $\rho_A$ to mean the result of tracing out $B$.  Similar for three or more systems.  Given a pure state $\ket{\psi}$ we will write $\kb{\psi}$ to mean $\kb{\psi} = \ket{\psi}\bra{\psi}$.

Let $\rho_{AB}$ be a \emph{classical-quantum (cq) state} where the $A$ register is $n$-bits.  Then, the \emph{quantum min entropy} \cite{renner2008security,konig2009operational} is defined to be:
\begin{equation}%\label{eq:introduction:}
\Hmin(A|E)_\rho = -\log_2\max_{\mathcal{E}_a}\sum_aPr(A=a)tr\left(\mathcal{E}_a\rho_{E}\up{a}\right),
\end{equation}
where the maximum is over all POVMs acting on Eve's ancilla, while $\rho_{E}\up{a}$ is Eve's ancilla conditioned on Alice's classical register being $a$.

The \emph{smooth min entropy} \cite{renner2008security} is defined to be $\Hmin^\epsilon(A|E)_\rho = \sup_\sigma \Hmin(A|E)_\sigma$, where the supremum is overall all quantum states $\sigma$ which are $\epsilon$ close to $\rho$ in trace distance, i.e., $\trd{\rho-\sigma} \le \epsilon$.

A useful property of min entropy is the following: Given a mixed state $\rho_{AEZ}$, classical in $Z$, it holds that:
\begin{equation}\label{eq:introduction:min-mix}
\Hmin(A|E)_\rho \ge \Hmin(A|EZ)_\rho \ge \min_z\Hmin(A|E, Z=z)_\rho,
\end{equation}
where $\Hmin(A|E, Z=z)_\rho$ is the min entropy in the state conditioned on $Z$ being a specific value $z$.

Another useful lemma we will use later is the following which was proven in \cite{bouman2010sampling} based on a proof in \cite{renner2008security} (see also \cite{thomas2025new} for more discussion on how the $c$ value appears in this lemma):
\begin{lemma}\label{lemma:introduction:min-sup}
  Let $\ket{\psi}_{AE} = \sum_{a\in J}\ket{a}^M_A\ket{E_a}$ be a quantum state, with $J\subset\{0,1\}^n$.  Assume a measurement of the $A$ system is made in some other orthonormal basis $N$, resulting in quantum state $\rho_{NE}$.  Then:
  \begin{equation}%\label{eq:introduction:}
    \Hmin(N|E)_\rho \ge n\cdot c - \log_2|J|,
  \end{equation}
  where $c = -\log_2\max_{i,j}|\braket{n_i|m_j}|^2$.
\end{lemma}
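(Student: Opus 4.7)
The plan is to bound the guessing probability $p_{\text{guess}}(N|E)_\rho = 2^{-\Hmin(N|E)_\rho}$ directly and then translate back to the min-entropy statement. The key observation is that once $\ket{\psi}$ is re-expressed in the $N$-basis on the $A$ register, the post-measurement conditional states on $E$ admit an expansion that pairs well with POVM completeness on Eve's side.

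First I would rewrite each $\ket{a}^M$ in the $N$-basis via the coefficients $\beta_{x,a} := \prod_i \braket{n_{x_i}|m_{a_i}}$, obtaining $\ket{\psi}_{AE} = \sum_x \ket{x}^N \otimes \ket{\tilde E_x}$ with $\ket{\tilde E_x} := \sum_{a\in J}\beta_{x,a}\ket{E_a}$. Measuring $A$ in the $N$ basis produces the cq-state $\rho_{NE} = \sum_x \kb{x}^N \otimes \ket{\tilde E_x}\bra{\tilde E_x}$ (with unnormalised conditional blocks), so that
\[
  p_{\text{guess}}(N|E) = \max_{\{M_x\}} \sum_x \bra{\tilde E_x} M_x \ket{\tilde E_x},
\]
the maximum being over POVMs $\{M_x\}$ on $E$.

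Next I would fix a rank-one decomposition $M_x = \sum_k \ket{g_{x,k}}\bra{g_{x,k}}$ of each POVM element, so that POVM completeness becomes the single resolution of identity $\sum_{x,k}\ket{g_{x,k}}\bra{g_{x,k}} = I_E$. The guessing probability then rewrites as $\sum_{x,k}|\braket{g_{x,k}|\tilde E_x}|^2 = \sum_{x,k}\big|\sum_{a\in J}\beta_{x,a}^*\braket{g_{x,k}|E_a}\big|^2$, and Cauchy--Schwarz applied to the inner sum over $a$ gives
\[
  p_{\text{guess}} \le \sum_{x,k}\Big(\sum_{a\in J}|\beta_{x,a}|^2\Big)\Big(\sum_{a\in J}|\braket{g_{x,k}|E_a}|^2\Big).
\]
Since $|\beta_{x,a}|^2 = \prod_i|\braket{n_{x_i}|m_{a_i}}|^2 \le c^n$ uniformly in $a$, the first bracket is bounded by $|J|\cdot c^n$. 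Pulling this factor out, interchanging the $(x,k)$ sum with the $a$ sum, and invoking completeness on the remaining factor collapses it to $\sum_{a\in J}\braket{E_a|E_a}$, which equals $1$ by normalisation of $\ket{\psi}$ and orthonormality of the $\ket{a}^M$. Hence $p_{\text{guess}} \le |J|\cdot c^n$, which upon taking $-\log_2$ yields the min-entropy lower bound claimed in the lemma.

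The only delicate manoeuvre is arranging Cauchy--Schwarz in the second step so that the Eve-side factor reassembles into POVM completeness rather than into something that depends on Eve's choice; once the inner and outer sums have been grouped appropriately the rest is routine algebra, and the argument follows the template of \cite{bouman2010sampling,renner2008security}.
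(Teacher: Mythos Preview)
Your argument is correct and is essentially the standard operational proof of this bound: expand in the $N$-basis, write the guessing probability against an arbitrary POVM, apply Cauchy--Schwarz over $a\in J$, and collapse the Eve-side factor via POVM completeness. The paper does not actually supply its own proof of this lemma; it simply cites \cite{bouman2010sampling,renner2008security}, so there is nothing to compare against beyond noting that your derivation matches the template of those references.

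One point worth flagging: what you actually prove is $p_{\text{guess}}\le |J|\,c^n$, hence
\[
  \Hmin(N|E)_\rho \;\ge\; n\cdot\bigl(-\log_2 c\bigr) \;-\; \log_2|J|,
\]
whereas the lemma as printed reads $n\cdot c - \log_2|J|$ with $c=\max_{i,j}|\braket{n_i|m_j}|^2$. These are not the same quantity. Your bound is the correct one (and is the form that appears in \cite{bouman2010sampling}); the paper's statement appears to carry a typo in the definition of $c$, which should be $c=-\log_2\max_{i,j}|\braket{n_i|m_j}|^2$. This is consistent with how the paper later uses the lemma in Section~\ref{sec:application:b92}, where $\wt{M}$ is the $X$ basis, the key measurement is in $Z$, the maximal overlap squared is $1/2$, and the coefficient in front of $c_0$ in the final key length is $1=-\log_2(1/2)$ rather than $1/2$. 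So your closing line ``upon taking $-\log_2$ yields the min-entropy lower bound claimed'' is right in spirit but should be accompanied by a remark that the printed $c$ must be read as $-\log_2$ of the overlap.
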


Quantum min entropy can be used to bound the number of secret, uniform random bits, that may be extracted from a cq-state.  In particular, it was shown in \cite{renner2008security} that, following a privacy amplification process, involving the hashing of $n$-bit register $A$ to an $\ell$-bit register $f(A)$, via a randomly chosen two-universal hash function, $f(\cdot)$, it holds that:
\begin{equation}\label{eq:introduction:pa}
\trd{\rho_{f(A), EF} - \frac{I}{2^\ell}\otimes \rho_{EF}} \le 2^{-\frac{1}{2}(\Hmin^\epsilon(A|E)_\rho - \ell)} + 2\epsilon
\end{equation}
The above is a useful identity for bounding the secret key size of a QKD protocol.  Note that the min entropy computation, on the right-hand side of the above equation, is based on the state before privacy amplification is run.

A QKD protocol is said to be $\epsilon$-secure if \cite{renner2008security}:
\begin{equation}\label{eq:introduction:ep-sec}
p_{ok}\trd{\rho_{KE} - \frac{I}{2^\ell}\otimes\rho_E} \le \epsilon,
\end{equation}
where $p_{ok}$ is the probability that Alice and Bob do \emph{not} abort the protocol.  Above, $\rho_{KE}$ is the state of the system after running the protocol, which includes error correction and privacy amplification, conditioned on not aborting. Here $K$ is the secret key register.

\subsection{Quantum Sampling}
We will use a quantum sampling framework introduced in \cite{bouman2010sampling} by Bouman and Fehr.  We will only briefly summarize the result here.  Consider a \emph{classical sampling strategy}, denoted $\samp$, over words $q\in\{0,1\}^N$ which consists of a distribution $P_T$ over subsets of $\{1, \cdots, N\}$, along with a set of ``guess'' and ``target'' functions, $g_j$ and $\tau_j$ respectively.  Each $g_j,\tau_j:\{0,1\}^*\rightarrow \mathbb{R}$.  The strategy chooses a subset and evaluates $g_j(q_t)$ for all $j$.  Ideally, it should hold that each guess $g_i(q_t)$ is $\delta$-close to a target value on the unobserved portion $\tau_j(q_{-t})$.  Fix $\delta > 0$ and a subset $t\subset\{1,\cdots, N\}$ such that $P_T(t) > 0$ and consider the set:
\begin{equation*}%\label{eq:introduction:}
\mathcal{G}^t_\delta = \{q\in\{0,1\}^N \st \max_j|g_j(q_t)-\tau_j(q_{-t})|\le\delta\}.
\end{equation*}
The above set represents ``good words'' such that if $t$ is the subset chosen, the sampling strategy ``succeeds.''  One is interested in the failure probability of the strategy, namely:
\begin{equation}%\label{eq:introduction:}
\epsilon^{cl}_\delta = \max_{q\in\{0,1\}^N}Pr(q \not \in \mathcal{G}_\delta^t)
\end{equation}
where the probability is over subset choices $t$.

The alphabet need not be bit-strings, and can be more complex elements.  For instance, a \emph{multi-party sampling strategy} is similar to the above, but operates over words $q = (q^A, q^B) \in \{0,1\}^N\times\{0,1\}^N$.  Now, $q_t = (q^A_t, q^B_t)$, while the guess and target functions are $g_j, \tau_j : \{0,1\}^*\times\{0,1\}^* \rightarrow \mathbb{R}$.  This simulates Alice and Bob sampling their respective portions of the word and evaluating a joint function of their individual observations.

A classical sampling strategy, multi-party or otherwise, can be extended to the quantum domain in the following, natural, manner: A state $\ket{\psi}_{ABE}$ is given, where the $A$ and $B$ portions are $N$ qubits each.  Alice and Bob choose $t$ according to the sampling strategy, then measure those qubits, indexed by $t$, in some basis $M = \{\ket{m_0}, \ket{m_1}\}$.  The main result from \cite{bouman2010sampling} is that the post-measured state collapses to a superposition of ``good words'' relative to the given basis $M$.

Formally, let:
\begin{equation}\label{eq:introduction:good-states}
\mathcal{G}^t_\delta(M) = \text{span}(\ket{q}^M \st q \in \mathcal{G}^t_\delta)\otimes \mathcal{H}_E.
\end{equation}
Then the following theorem holds:
\begin{theorem}\label{thm:introduction:sample}
  (From \cite{bouman2010sampling}): Let $\delta > 0$, $M$ an orthonormal basis, and $\ket{\psi}_{ABE}$ be a state as described above.  Then there exist \emph{ideal states} $\{\ket{\phi^t}_{ABE}\}_t$, indexed over subsets $t$, such that $\ket{\phi^t}_{ABE} \in \mathcal{G}^t_\delta(M)$, and:
  \begin{equation}%\label{eq:introduction:}
    \frac{1}{2}\trd{\sum_tP_T(t)\kb{t}\otimes\left(\kb{\psi}_{ABE} - \kb{\phi^t}_{ABE}\right)} \le \sqrt{\epsilon_\delta^{cl}}.
  \end{equation}
\end{theorem}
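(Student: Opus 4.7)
The plan is to exhibit explicit ideal states as renormalized projections of $\ket{\psi}_{ABE}$ onto the good subspace, and then reduce the averaged trace-distance bound to a purely classical statement about the sampling strategy. Let $\Pi^t$ denote the orthogonal projector onto $\mathcal{G}^t_\delta(M)$. I would define $\ket{\phi^t}_{ABE} := \Pi^t\ket{\psi}_{ABE}/\|\Pi^t\ket{\psi}\|$ whenever this norm is nonzero, and otherwise pick any fixed unit vector in $\mathcal{G}^t_\delta(M)$. By construction each $\ket{\phi^t}$ lies in the good subspace, so only the trace-distance inequality remains to be established.

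Because the outer register holding $t$ is classical, the trace norm factorizes as
\[
\frac{1}{2}\trd{\sum_t P_T(t)\kb{t}\otimes(\kb{\psi}-\kb{\phi^t})} = \sum_t P_T(t)\cdot \frac{1}{2}\trd{\kb{\psi}-\kb{\phi^t}}.
\]
For two pure states one has the identity $\frac{1}{2}\trd{\kb{\psi}-\kb{\phi}} = \sqrt{1-|\braket{\psi|\phi}|^2}$, and the construction of $\ket{\phi^t}$ gives $\braket{\psi|\phi^t} = \|\Pi^t\ket{\psi}\|$, so $1-|\braket{\psi|\phi^t}|^2 = \|(I-\Pi^t)\ket{\psi}\|^2$, which I denote $p^t_{\text{bad}}$. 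Applying Jensen's inequality to the concave square root then yields
\[
\sum_t P_T(t)\sqrt{p^t_{\text{bad}}} \le \sqrt{\sum_t P_T(t)\, p^t_{\text{bad}}}.
\]

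The key remaining step, and the one that genuinely couples the quantum and classical sides, is to bound $\sum_t P_T(t)\, p^t_{\text{bad}} \le \epsilon^{cl}_\delta$. I would expand $\ket{\psi}_{ABE} = \sum_{q}\ket{q}^M\otimes\ket{E_q}_E$ in the product $M$-basis over all $2N$ qubits of $AB$ with subnormalized $\ket{E_q}$, so that $p(q) := \|\ket{E_q}\|^2$ is the distribution arising from a hypothetical joint $M$-basis measurement. Since $\Pi^t$ projects onto $\text{span}(\ket{q}^M \st q\in\mathcal{G}^t_\delta)\otimes\mathcal{H}_E$, one obtains $p^t_{\text{bad}} = \sum_{q\notin\mathcal{G}^t_\delta}p(q)$. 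Swapping the order of summation and invoking the definition of the classical failure probability then gives
\[
\sum_t P_T(t)\, p^t_{\text{bad}} = \sum_{q}p(q)\Pr_{t\sim P_T}[q\notin\mathcal{G}^t_\delta] \le \max_q \Pr_{t\sim P_T}[q\notin\mathcal{G}^t_\delta] = \epsilon^{cl}_\delta,
\]
and combining this with the previous estimates delivers the target bound $\sqrt{\epsilon^{cl}_\delta}$.

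The only real obstacle is conceptual rather than technical: one has to notice that the quantum ``bad weight'' $p^t_{\text{bad}}$ induced by the projection onto the good subspace is exactly the failure probability of a virtual $M$-basis measurement of the full $AB$ system, at which point the classical supremum $\epsilon^{cl}_\delta$ takes over automatically. The remaining ingredients --- the pure-state trace-distance/fidelity identity, Jensen's inequality for $\sqrt{\cdot}$, and the block structure of the trace norm against a classical register --- are completely standard manipulations.
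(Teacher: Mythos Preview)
Your argument is correct and is precisely the standard proof from the Bouman--Fehr reference that the paper cites; the paper itself does not reprove this theorem but merely quotes it, so there is nothing further to compare.
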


Finally, one may analyze the entropy in ideal states to derive a bound on the key-rate of a protocol.  In particular, the following lemma will be useful later:
\begin{lemma}\label{lemma:introduction:ideal-sec}
(From \cite{thomas2025new}): Let $\rho_{KE}$ be the result of running a QKD protocol on an input state $\ket{\psi}_{ABE}$.  Let $\Hmin(A|E)_\sigma \ge \gamma$, where $\sigma$ is the result of running the same QKD protocol on ideal states, before privacy amplification, and conditioned on not aborting the protocol.  Then the real protocol is $2^{-\frac{1}{2}(\gamma-\ell)} + 4\sqrt{\epsilon^{cl}_\delta}$ secure according to Equation \ref{eq:introduction:ep-sec}.
\end{lemma}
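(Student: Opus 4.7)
The plan is to reduce security of the real protocol to a privacy amplification statement about the ideal states via the quantum sampling theorem, chained together by triangle inequalities. The three ingredients are: Theorem \ref{thm:introduction:sample} (the real state is close to a mixture of ideal states), monotonicity of trace distance under CPTP maps (running the remainder of the protocol preserves this closeness), and the leftover hash lemma quoted in the preliminaries (privacy amplification on the ideal makes it close to uniform conditioned on Eve).

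First, I would apply Theorem \ref{thm:introduction:sample} to obtain, for each subset $t$, ideal states $\ket{\phi^t}_{ABE} \in \mathcal{G}^t_\delta(M)$ such that, in the joint state including the subset register $T$, the real mixture $\sum_t P_T(t)\kb{t}\otimes\kb{\psi}$ and the ideal mixture $\sum_t P_T(t)\kb{t}\otimes\kb{\phi^t}$ satisfy $\frac{1}{2}\trd{\cdot} \le \sqrt{\epsilon^{cl}_\delta}$. Let $\Lambda$ be the CPTP map encoding the remainder of the QKD protocol (subset measurement, filtering, acceptance flag, error correction with syndrome leakage to $E$, and two-universal hashing $F$ of $A$ to $K$). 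By monotonicity of trace distance, the same bound holds between $\Lambda$ applied to the real state and $\Lambda$ applied to the ideal mixture.

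Second, since the accept/abort flag is classical, the post-$\Lambda$ states are block-diagonal in this flag, and the Schatten $1$-norm splits additively over the two blocks. I would therefore restrict to the unnormalized ok-branch, obtaining $\trd{p^{real}_{ok}\rho_{KEF} - p^{ideal}_{ok}\sigma_{KEF}} \le 2\sqrt{\epsilon^{cl}_\delta}$ and, by tracing out $K$, the same bound with $\rho_{EF}, \sigma_{EF}$. Third, from the hypothesis $\Hmin(A|E)_\sigma \ge \gamma$ on the normalized ideal ok-state, the leftover hash lemma gives $\trd{\sigma_{KEF} - (I/2^\ell)\otimes \sigma_{EF}} \le 2^{-(\gamma-\ell)/2}$. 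A three-term triangle inequality, with $p^{ideal}_{ok}\sigma_{KEF}$ and $p^{ideal}_{ok}(I/2^\ell)\otimes\sigma_{EF}$ as the middle waypoints, then bounds $p^{real}_{ok}\trd{\rho_{KEF} - (I/2^\ell)\otimes \rho_{EF}}$ by $2\sqrt{\epsilon^{cl}_\delta} + p^{ideal}_{ok}\cdot 2^{-(\gamma-\ell)/2} + 2\sqrt{\epsilon^{cl}_\delta} \le 2^{-(\gamma-\ell)/2} + 4\sqrt{\epsilon^{cl}_\delta}$, as claimed.

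The main obstacle is bookkeeping around conditioning on non-abort, since $p^{real}_{ok}$ and $p^{ideal}_{ok}$ generally differ and the leftover hash hypothesis is stated for the normalized ideal state while security is stated in the weighted form with $p^{real}_{ok}$ in front. The cleanest resolution is to stay with unnormalized ok sub-states throughout: block-diagonality in the classical flag means every CPTP map, every partial trace, and every triangle inequality respects the weighting, so each of the three pieces of the final bound carries the correct probability factor automatically and no renormalization penalty appears.
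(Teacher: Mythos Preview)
The paper does not give its own proof of this lemma; it is quoted as a result from \cite{thomas2025new} and used as a black box. Your argument is correct and is the standard one: pass from the real to the ideal mixture via Theorem~\ref{thm:introduction:sample} and monotonicity of trace distance under the protocol CPTP map, restrict to the (classical) accept block, apply the leftover hash lemma to the normalized ideal accept-state, and assemble the three pieces by a triangle inequality on the unnormalized ok sub-states, which reproduces exactly the claimed constant $2^{-\frac{1}{2}(\gamma-\ell)} + 4\sqrt{\epsilon^{cl}_\delta}$.
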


%%% Local Variables:
%%% mode: latex
%%% TeX-master: "main"
%%% End:

\section{Extended B92 Protocol}\label{sec:protocol:}

Extended B92, introduced in \cite{lucamarini2009robust}, as its name implies, extends the standard B92 protocol \cite{bennett1992quantum} by adding two non-orthogonal test states.  This was meant to keep some of the benefits of B92 style encoding, in particular non-orthogonal states can help with PNS attacks \cite{huttner1995quantum,lutkenhaus2002quantum}, while also countering, better,  the unambiguous state discrimination attack \cite{duvsek2006quantum,ko2018advanced}.  The protocol, at a high level, involves two types of rounds: Test rounds and Key rounds.  If this is a Test round, Alice will send either $\ket{+}$ or $\ket{-}$ (choosing randomly, though not necessarily with uniform probability) which will be used to test the fidelity of the channel.  If it is a Key round, Alice will send, randomly, one of two non-orthogonal states.  These states are denoted $\ket{\phi_0}$ and $\ket{\phi_1}$, where:
\begin{align}%
  \ket{\phi_0} &= \cos\frac{\theta}{2}\ket{+} + \sin\frac{\theta}{2}\ket{-}\label{eq:prot:sig-state}\\
  \ket{\phi_1} &= \cos\frac{\theta}{2}\ket{+} - \sin\frac{\theta}{2}\ket{-}\notag
\end{align}
We also define $\ket{\bar{\phi}_i}$ to be a vector orthogonal to $\ket{\phi_i}$, namely $\ket{\bar{\phi}_0} = \sin\frac{\theta}{2}\ket{+} - \cos\frac{\theta}{2}\ket{-}$ and $\ket{\bar{\phi}_1} = \sin\frac{\theta}{2}\ket{+} + \cos\frac{\theta}{2}\ket{-}$. 

The receiver, Bob, is allowed to measure in the $X$ basis (for Test rounds) or, on key rounds, is able to measure using POVM $\{M_0, M_1, M_?\}$, where $M_0 = p\kb{\bar{\phi}_1}$, $M_1 = p\kb{\bar{\phi}_0}$ and $M_? = I - M_0 - M_1$.  Here $p = \frac{x^2}{2\cos^2\frac{\theta}{2}}$ for some $x$ depending on the measurement devices, with $x = 1$ modeling ideal devices, and $x = \cos\frac{\theta}{2}$ modeling practical devices \cite{lucamarini2009robust}.  A measurement of $M_?$ is inconclusive and will lead to Bob discarding that round.  

On Test rounds, Alice and Bob will disclose their choices and measurement outcomes.  On Key rounds, Bob will only disclose whether he observed $M_?$ or not.  $M_?$ will indicate an \emph{inconclusive measurement} and any such rounds are discarded.  Otherwise, his observation of $M_k$ will lead directly to his raw key bit of $k$.  It is not difficult to see that this protocol is exactly B92 but with the extension of two non orthogonal states during the Test rounds.

%We comment that in (CITE), we analyzed a varient of this protocol which used only one additional test state (instead of two) - however it also relied on substantial use of mismatched measurements to counteract that (CITE) which should lead to the same result (without mismatched measurements) if two test states had been used as we analyze here.

The above can be reduced to an equivalent entanglement based protocol as discussed in \cite{lucamarini2009robust} using the following identity:
\begin{equation}\label{eq:application:ent-eq}
  \frac{1}{\sqrt{2}}\ket{0,\phi_0}_{AB} + \frac{1}{\sqrt{2}}\ket{1,\phi_1}_{AB} = \cos^2\frac{\theta}{2}\ket{++} + \sin^2\frac{\theta}{2}\ket{--},
\end{equation}

Specifically, the entanglement based version that we analyze operates as follows:
\heading{Public Input}
\begin{itemize}%
\item $N$: Total number of rounds of the protocol.
\item $m \le N/2$: Total number of test rounds (see discussion, below).
\item $\theta$: Signal state parameter (see Equation \ref{eq:prot:sig-state})
\item $Q$: Maximal tolerated error, after which parties abort the protocol.
\item $n_0$: Minimal number of tolerated accepted rounds.  If fewer rounds than this lead to a conclusive measurement outcome, parties abort.
\end{itemize}

\heading{Quantum Communication Stage}
\begin{enumerate}%
\item A source (potentially an adversary) produces a quantum state $\ket{\psi}_{ABE}$, where the $A$ and $B$ registers consist of $N$ qubits each, while the $E$ system is arbitrary, but finite.  The $A$ qubits are sent to Alice and the $B$ qubits are sent to Bob.  Ideally, this should be an $N$-fold tensor product of states of the form in Equation \ref{eq:application:ent-eq}.
\item Alice and Bob agree on a random subset $t \subset\{1,\cdots, N\}$ such that $|t| = m$.  See text, below for discussion on this.  Note, we are assuming measurement devices are memory-less for this test, as is also done in \cite{tomamichel2012tight}.
\item For those rounds $i\in t$, Alice and Bob measure their respective qubits in the $X$ basis, reporting their outcomes which we denote $q^A$ and $q^B$ (which are $m$-bit strings).  This allows them to compute $q = w(q^A\oplus q^B)$, namely the relative number of errors in their $X$ basis outcomes.  If $q > Q$, Alice and Bob abort.  As seen in Equation \ref{eq:application:ent-eq}, their outcomes should, ideally, be correlated.
\item On the remaining $n = N-m$ rounds, Bob will measure using POVM $\{M_0, M_1, M_?\}$, as discussed above.  Alice will measure in the $Z$ basis.
\item Bob will disclose all rounds where he observed $M_?$.  Let $c_0$ be the number of rounds which were not discarded (namely, the number of rounds where his measurement outcome was $M_0$ or $M_1$).  If $c_0 < n_0$, parties abort.
\item For all rounds that were not discarded, Alice's raw key will be her $Z$ basis measurement outcome, while Bob's will be his outcome $M_k$, for $k\in\{0,1\}$.
\end{enumerate}

Following the conclusion of the above, parties will take their raw keys and perform error correction, followed by privacy amplification, as normal in QKD.  We comment that the method of choosing the subset $t$, above, can be done by having Alice choose a random subset and sending it to Bob.  Alternatively, as in practice, Alice and Bob will choose independently which rounds will be Test rounds and then the subset $t$ is constructed from those rounds where both parties choose Test.  We analyze the case where $t$ is chosen completely by one party; however our analysis works in the case where both parties choose independently at random, though one must take into account that $m$, the size of the subset, is a random variable, and parties should abort if it is ``too small.''  However, these details are purely classical sampling details that are easily added to our analysis.

To analyze the security of the above protocol, we will need a bound on the min entropy of Alice's $Z$ basis measurement, conditioned on Eve, for all rounds that were not discarded (in order to apply Equations \ref{eq:introduction:pa} and \ref{eq:introduction:ep-sec}).  Note that Bob's final raw key result does not matter for this computation (it will matter for correctness of the protocol, of course, and for determining a bound on the error correction leakage).  For this reason, we will actually consider a ``toy'' version of the protocol, where only Alice gets a raw key, and Bob, following his POVM measurement, will ``shut down.''  Namely, the protocol is identical to the above, except that on Step 4, Bob will measure using a Filtering POVM $\{F_0, F_1\}$, where $F_0 = M_0 + M_1$ and $F_1 = M_?$.  For every round where his filter produces an outcome of $F_1$, he will later signal to Alice to discard that round.  Otherwise the round is kept.  All rounds which are discarded are traced out, and Bob's remaining system is also traced out.  Alice measures the non-discarded rounds in the $Z$ basis.  It is not difficult to see that the resulting density operator will be identical (after tracing out Bob for both the real and toy protocol) and, thus, analyzing the toy protocol will produce a valid entropic uncertainty result for the real protocol.

%%% Local Variables:
%%% mode: LaTeX
%%% TeX-master: "main"
%%% End:

\section{Main Technical Result}\label{sec:result:}

We now turn to our main result.  For this, we consider a very general experiment (which models a QKD protocol, but can also model other cryptographic protocols):

%\begin{enumerate}%
1. On input a quantum state $\rho\up{0}_{ABE}$, produced potentially by Eve who holds the $E$ system, where the $A$ (Alice) and $B$ (Bob) systems are $N$ qubits each, Alice and Bob run a multi-party sampling strategy $\samp$ where all subsets are of size $m$, with respect to orthonormal basis $M=\{\ket{m_0}, \ket{m_1}\}$, to get sampling data $(t,s)$ and some post measured state $\rho\up{t,s}_{ABE}$ where, now, the $A$ and $B$ systems are $n = N-|t| = N-m$ qubits each.  As a mixed state, this is
  \begin{equation}%\label{eq:result:}
    \rho_{ABETS} = \sum_{t,s}p(t,s)\kb{t,s}\otimes\rho_{ABE}\up{t,s}
  \end{equation}
  Note that the sampling data $s$ may consist of numerous entries, depending on the given sampling strategy.

2. Bob now measures his unsampled qubits (in the new $B$ register) using measurement operators $\filter^B = \{F^B_0, F^B_1\}$. Alice measures her $A$ system using $\filter^A = \{F_0^A, F_1^A\}$.  These act as ``filtering'' measurements where a result of ``$1$'' will mean to discard that particular system/round.  The post-measured state of these operators is also saved in the new $A$ and $B$ registers (which are still $n$ qubits each).  Let $D^B$ and $D^A$ be the (classical) registers storing the outcome of these measurements and let $D$ be the register such that $D_i = 0$ only if both $D^A_i=0$ and $D^B_i=0$ (otherwise $D_i = 1$).  Parties will later discard any qubit where $D_i = 1$.  Note that, in practice, data discarding and filtering may be done by first measuring in a final basis, then sifting through their results; however this can be modeled as first applying a suitable filtering measurement as we do here (e.g., the measurement may project into a subspace of states that would have been discarded or accepted).

  3. Parties apply an \emph{Abort map} $\mathcal{R}_\mathcal{S}$, which will set an abort flag in register $R$ to ``$1$'' (i.e., \texttt{True}), if $s \not\in \mathcal{S}$ or $\num_0(D) < n_0$ for user specified $\mathcal{S}$ and $n_0$.  The set $\mathcal{S}$ can specify, for instance, the maximal tolerated noise parties will accept before aborting, while $n_0$ is the user-specified minimum allowed number of accepted (not discarded) rounds.

  4. Alice measures her remaining systems (those not rejected by the filtering measurements) in the $Z$ basis to get register $A_Z$.  Bob measures in some other two-outcome POVM to get register $B_P$.  These are their raw keys.  Note that Alice could measure in an alternative basis in an actual protocol, however we can model that here simply by adding a change of basis operation to Alice's filtering measurements. The resulting density operator is denoted $\rho_{A_ZB_PETSD^AD^BDR}$.

  5. Assuming the abort flag is not set, parties perform error correction (EC), leaking at most $\leakEC$ bits and finally privacy amplification (PA), hashing the resulting raw key registers (the error corrected $A_Z$ and $B_P$ registers) to $\ell$-bits.
  % \end{enumerate}

Our main result is to show that the min entropy in the $A_Z$ register, before error correction and privacy amplification, but after discarding systems, is ``high,'' or at least bounded by a function of $\mathcal{S}$, $n_0$ and the classical strategy $\samp$.  In particular, consider the following function:
\begin{align}%\label{eq:result:}
  &\gamma(\samp, \mathcal{S}, c_0) = \max_{\substack{s\in\mathcal{S}\\d\in\{0,1\}^n:\num_0(d)= c_0\\b\in\{0,1\}^n\\a\in\{0,1\}^{n-c_0}}}\label{eq:gamma}\\
  &\log_2\left|\left\{q\in\{0,1\}^{c_0} \st \max_j\left|s_j-\tau_j\left(\pi_d\left(q, a\right), b\right)\right| \le \delta\right\}\right|.\notag
\end{align}
where, above, $\pi_d:\{0,1\}^{\num_0(d)}\times\{0,1\}^{\num_1(d)}\rightarrow\{0,1\}^n$ is a permutation that places the first input into the those bits of the output string where $d$ is zero and places the second input to those bits of the output string where $d=1$.  For example, if $d = 01011$, then:
\begin{equation}\label{eq:result:pi-example}
\pi_d(ab, cde) = acbde.% for bits $a,b,c,d,e$.
\end{equation}

Our main result, below, shows that if one can bound the above function, then one can derive a bound on the quantum min entropy of Alice's measurements on those systems not discarded.  Bounding the above function will depend on the sampling strategy; for many, however, it turns out that the set behaves nicely, as we show in Section \ref{sec:application:b92}.  For example, a common sampling function is the Hamming weight, which is permutation invariant, and thus simplifies the above expression.  The above description of the function, however, works for any multi-party sampling strategy (and thus any protocol that can be modeled by such a strategy and the above described experiment).

Our main result, then, is stated in the following theorem:
\begin{theorem}\label{thm:result:main}
  Let $\delta > 0$ and $\rho_{ABE}$ be a density operator where the $A$ and $B$ registers consist of $N$ qubits.  Let $\rho_{A_ZB_PETSD^AD^BDR}$ be the result of running the above described protocol (before EC and PA are run).  Then, if for all $j,k\in\{0,1\}$ and $u\in\{A,B\}$ it holds that:
  \begin{equation}\label{eq:result:filter-hyp}
    F_j^u\ket{k}^M = \lambda_u(j|k)\ket{k}^{\wt{M}}
  \end{equation}
  for $\lambda_u(j|k) \in \mathbb{C}$, and some other (or same) orthonormal basis $\wt{M}$ (where, recall, $M$ is the sampling basis), then it holds that a $5\sqrt{\epsilon_\delta^{cl}}$-secure key may be distilled from the above state of length $\ell$ with:
  \begin{equation}%\label{eq:result:}
    \ell = \min_{c_0 \ge n_0}\big[c_0\cdot c - \log_2\gamma(\samp, \mathcal{S}, c_0)\big] - \leakEC - 2\log_2\frac{1}{\sqrt{\epsilon_\delta^{cl}}},
  \end{equation}
where $c = -\log_2\max_{i,j}|\braket{\wt{m}_i|j}|^2$.
\end{theorem}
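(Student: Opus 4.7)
The plan is to first invoke Theorem~\ref{thm:introduction:sample} to replace the adversarial input $\rho_{ABE}$ by ideal states $\ket{\phi^t}_{ABE}\in\mathcal{G}_\delta^t(M)\otimes\mathcal{H}_E$; by Lemma~\ref{lemma:introduction:ideal-sec} it then suffices to lower-bound $\Hmin(A_Z|E)$ in the ideal scenario (after EC but before PA) and absorb an additive $4\sqrt{\epsilon_\delta^{cl}}$ into the final security. I trace the ideal state through the experiment: conditioning on a sampling outcome $(t,s)$ with $s\in\mathcal{S}$ collapses the post-sampling state to a superposition $\sum_{(q^A,q^B)\in J_s^t}\alpha_{q^A,q^B}\ket{q^A}^M\ket{q^B}^M\ket{E_{q^A,q^B}}$, where $J_s^t$ is the set of all $(q^A,q^B)$ with $\max_j|s_j-\tau_j(q^A,q^B)|\le\delta$. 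Hypothesis~(\ref{eq:result:filter-hyp}) lets the filter operators be pushed through coordinate by coordinate: for filter outcomes $(d^A,d^B)$ the index set $J_s^t$ is preserved, but every $\ket{\cdot}^M$ ket becomes $\ket{\cdot}^{\wt{M}}$, with the $\lambda_A,\lambda_B$ factors absorbed into the amplitudes.

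The central step I expect to need is to pass to a hypothetical augmented scenario in which Eve is additionally handed the $\wt{M}$-basis measurement outcomes of Alice's discarded qubits and of all of Bob's remaining qubits, producing classical registers $\tilde{A}_{\text{disc}}$ and $\tilde{B}$. By~(\ref{eq:introduction:min-mix}) this only decreases Alice's min entropy, so any lower bound in the augmented scenario is also valid in the original. Conditioned on a joint outcome $(t,s,d^A,d^B,\tilde{a}_{\text{disc}},\tilde{b})$, the residual state on $A_{\text{acc}}$ (Alice's $c_0=\num_0(D)$ accepted qubits) together with the original $E$ is exactly a $\wt{M}$-basis superposition over the set
\[
\bigl\{a\in\{0,1\}^{c_0} \st (\pi_D(a,\tilde{a}_{\text{disc}}),\tilde{b})\in J_s^t\bigr\},
\]
whose cardinality is $\le 2^{\gamma(\samp,\mathcal{S},c_0)}$ directly from the definition~(\ref{eq:gamma}) (taking $d=D$, $a=\tilde{a}_{\text{disc}}$, $b=\tilde{b}$). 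Lemma~\ref{lemma:introduction:min-sup}, with $\wt{M}$ playing the role of $M$ and $Z$ the role of $N$, then gives $\Hmin(A_Z\mid\cdots)\ge c_0\cdot c-\gamma(\samp,\mathcal{S},c_0)$ for each non-aborting conditioning.

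Taking the minimum over all non-aborting outcomes via~(\ref{eq:introduction:min-mix}) collapses this to $\min_{c_0\ge n_0}[c_0\cdot c-\gamma(\samp,\mathcal{S},c_0)]$, since the bound depends on the outcomes only through $c_0$. Subtracting $\leakEC$ by the standard min-entropy chain rule for the error-correction transcript, and then invoking Lemma~\ref{lemma:introduction:ideal-sec} with PA length set so that $2^{-(\gamma-\ell)/2}=\sqrt{\epsilon_\delta^{cl}}$, yields the stated key length with total security $4\sqrt{\epsilon_\delta^{cl}}+\sqrt{\epsilon_\delta^{cl}}=5\sqrt{\epsilon_\delta^{cl}}$.

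The main obstacle is the augmentation step. Without the auxiliary $\wt{M}$ measurements on $A_{\text{disc}}$ and $B$, the set of labels appearing in Alice's accepted superposition would be $\{a:\exists (a',b),\,(\pi_D(a,a'),b)\in J_s^t\}$, which can be strictly larger than the set controlled by~(\ref{eq:gamma}). Handing Eve those classical outcomes ``pins down'' the discarded-Alice and Bob coordinates of the compatible good word, reducing Lemma~\ref{lemma:introduction:min-sup}'s index set to exactly the one whose cardinality $\gamma$ is designed to bound. This handover is possible only because hypothesis~(\ref{eq:result:filter-hyp}) keeps the $M$-basis labels intact through the filter, so the same good-word set $J_s^t$ from the sampling step continues to govern after filtering.
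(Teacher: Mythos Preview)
Your proposal is correct and follows essentially the same route as the paper's proof: invoke Theorem~\ref{thm:introduction:sample} for ideal states, push the filters through via hypothesis~(\ref{eq:result:filter-hyp}) so that the good-word structure persists in the $\wt{M}$ basis, then condition on the $\wt{M}$-basis values of Bob's register and of Alice's discarded coordinates before applying Lemma~\ref{lemma:introduction:min-sup} and Lemma~\ref{lemma:introduction:ideal-sec}. Your explicit ``augmentation'' framing (hand Eve the $\wt{M}$ outcomes on $B$ and on $A_{\text{disc}}$) is exactly what the paper does implicitly when it traces out Bob, writes the result as a mixture over $q^B$ and $q^{(A,1)}$, and then minimizes over those variables via~(\ref{eq:introduction:min-mix}).
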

\begin{proof}
  Our proof proceeds in four steps.  First, we will use Theorem \ref{thm:introduction:sample} to construct ``ideal'' states, our goal being to analyze these and then use Lemma \ref{lemma:introduction:ideal-sec} to promote the analysis to the real state.  Next, steps two and three involve tracing the protocol's execution, including all filtering operations and measurements, on the ideal states.  Finally, we show how to bound the min entropy of the resulting state, as a function of $\gamma(\samp, \mathcal{S}, c_0)$.
  
  \textbf{Step 1, Ideal State Construction:}
  First, consider a pure input state $\rho\up{0}_{ABE} = \kb{\psi}_{ABE}$. If the input state is not pure, we may purify it and give the purification system to Eve which can only be to her benefit.  By Theorem \ref{thm:introduction:sample}, there exist ideal states $\{\ket{\phi\up{t}}\}$, indexed by subsets $t$, such that:
  \begin{equation}%\label{eq:result:}
    \frac{1}{2}\trd{\sum_tP_T(t)\kb{t}\otimes\left(\kb{\psi}_{ABE} - \kb{\phi\up{t}}_{ABE}\right)}\le \sqrt{\epsilon^{cl}_\delta},
  \end{equation}
  and where each $\ket{\phi\up{t}} \in \mathcal{G}^t_\delta(M)$, where this subspace is induced by the given sampling strategy, as discussed in Section \ref{sec:introduction:prelim}.

  We trace the execution of the protocol above on the ideal system.
  After the sampling strategy runs, the ideal system is in the mixed state:
%  \begin{equation}%\label{eq:result:}
 $   \sum_tP_T(t)\kb{t}\otimes\sum_{s}p(s|t)\kb{s}\otimes\kb{\phi\up{t,s}},$
%  \end{equation}
  where the second sum is over all possible outputs of the sampling strategy, $s$, for this input state (which is a finite sum) and where:
  %\begin{equation}%\label{eq:result:}
   $ \ket{\phi\up{t,s}} = \sum_{(q^A,q^B)\in J_s}\ket{q^A,q^B}^M\otimes \ket{E_{q^Aq^B}^{t,s}}.$
  %\end{equation}
  Here, we have $J_s = $
  \begin{equation*}
    \{(q^A,q^B)\in\{0,1\}^n\times\{0,1\}^n \st \max_j|s_j - \tau_j(q^A,q^B)|\le\delta\}.
  \end{equation*}
  The above follows, since $\ket{\phi\up{t}}\in\mathcal{G}_\delta^t(M)$ (defined in Equation \ref{eq:introduction:good-states}).
%  This all follows from the fact that $\ket{\phi\up{t}}$ live in the space of good words as defined in Equation \ref{eq:introduction:good-states}.  %Note that the Eve vectors $E_{q^Aq^B}$ technically also depend on $t$ and $s$, though we forgo writing that dependence when the context is clear, to avoid notation overload.

  \textbf{Step 2, Application of Filtering Measurements:}
  Bob now applies his filter measurement $\filter^B$.  Similarly Alice measures using her filtering measurement.  Storing the resulting measurement outcomes in registers $D^AD^B$ yields the mixed state:
  \begin{align*}%\label{eq:result:}
    &\sum_{t,s}p(t,s)\kb{t,s}\sum_{d^A,d^B\in\{0,1\}^n}\kb{d^Ad^B}_{D^AD^B}\\
    &\otimes P\left(\sum_{(q^A,q^B)\in J_s}{F^A_{d^A}}\ket{q^A}^M{F^B_{d^B}}\ket{q^B}^M\ket{E^{t,s}_{q^Aq^B}}\right),
  \end{align*}
  where $P(\ket{z}) = \kb{z}$.
  Above, by ${F_{d^A}^A}\ket{q^A}^M$ we mean $F_{d^A_1}^A\ket{q^A_1}^M\otimes\cdots\otimes{F_{d^A_n}^A}\ket{q^A_n}^M$.  Similarly for Bob's filtering operation.  From our theorem hypothesis, Equation \ref{eq:result:filter-hyp}, we can write the above as:
  \begin{align*}%\label{eq:result:}
    &\sum_{t,s}p(t,s)\kb{t,s}\sum_{d^A,d^B\in\{0,1\}^n}\kb{d^Ad^B}_{D^AD^B}\otimes\\
    &P\left(\sum_{(q^A,q^B)\in J_s}\lambda_A(d^A|q^A)\lambda_B(d^B|q^B)\ket{q^A,q^B}^{\tilde{M}}\ket{E^{t,s}_{q^Aq^B}}\right),
  \end{align*}
  where $\lambda_A(d^A|q^A) = \lambda_A(d^A_1|q^A_1)\times\cdots\times\lambda_A(d^A_n|q^A_n)$ (and, of course, similarly for Bob).  Note that the pure state within the projector function $P(\cdot)$ is not necessarily normalized and its inner-product represents the probability of the filtering operation producing that particular value of $d^A$ and $d^B$.

  Setting the $D$ register appropriately (where $D_i = 0$ only if $D_i^A = D_i^B = 0$, namely $D = D^A\vee D^B$ where $\vee$ is the bitwise logical OR operation) yields:
  \begin{align*}%\label{eq:result:}
    &\sum_{t,s}p(t,s)\kb{t,s}\sum_{d\in\{0,1\}^n}\kb{d}_D\sum_{\substack{d^A,d^B\\d^A\vee d^B = d}}\kb{d^Ad^B}_{D^AD^B}\otimes\\
    &P\left(\sum_{(q^A,q^B)\in J_s}\lambda_A(d^A|q^A)\lambda_B(d^B|q^B)\ket{q^A,q^B}^{\tilde{M}}\ket{E^{t,s}_{q^Aq^B}}\right),
  \end{align*}

  \textbf{Step 3, Final Raw Key Measurements:}
  It is at this point that parties will run the remainder of the protocol.  Namely, for those systems not indexed for discarding (i.e., those where $D_i = 0$), Bob will measure in his key distillation POVM and Alice will measure in the $Z$ basis, leading to her raw key.  Since we are only interested in the entropy of Alice's measurement, we trace Bob out.  Equivalently, we may first trace out Bob's system and then discard Alice's qubits where $D_i=1$ and finally measure the remaining $A$ systems in the $Z$ basis.  Before this final measurement and discarding of Alice's system, but after tracing out Bob's entire register, we have:
  \begin{align*}%\label{eq:result:}
    &\sum_{t,s}p(t,s)\kb{t,s}\sum_{d\in\{0,1\}^n}\kb{d}_D\sum_{\substack{d^A,d^B\\d^A\vee d^B = d}}\kb{d^Ad^B}_{D^AD^B}\\
    &\otimes\sum_{q^B\in\{0,1\}^n}\lambda_B^2(d^b|q^B)\kb{\nu(t,s,d,d^A,d^B,q^B)}_{AE},
  \end{align*}
  where $\ket{\nu(t,s,d,d^A,d^B,q^B)}_{AE} = $
  \begin{equation}%\label{eq:result:}
 \sum_{q^A\in J(s,q^B)}\lambda_A(d^A|q^A)\ket{q^A}^{\tilde{M}}\ket{E^{t,s}_{q^Aq^B}}
  \end{equation}
  and:
  \begin{equation*}%\label{eq:result:}
    J(s,q^B) = \{q^A\in\{0,1\}^n \st \max_j|s_j - \tau(q^A,q^B)| \le \delta\}.
  \end{equation*}
  Note that the $\ket{\nu(t,s,d,d^A,d^B,q^B)}$ states are sub-normalized.

  Now we will trace through Alice's operations on the above state.  First, she traces out those systems where $D_i = 1$.  Equivalently she measures them and discards the output.  To maintain the dimension of the system, she replaces any discarded system with a $\ket{0}$.  We will follow the protocol's execution on a particular $\ket{\nu(t,s,d,d^A,d^B,q^B)}$ state (i.e., conditioning on this particular outcome); the joint mixed state will then simply be a weighted sum of these outputs.

  To trace this part of the protocol, instead of summing over $q^A\in\{0,1\}^n$, we will, for a particular $d$, write $q^A = \pi_d(q\up{A,0}, q\up{A,1})$, where $\pi_d$ is the permutation discussed earlier (see Equation \ref{eq:result:pi-example}).  We will then sum over these sub strings, allowing us to write the state as:
  \begin{align*}%\label{eq:result:}
    &\sum_{q\up{A,1}\in\{0,1\}^{\num_1(d)}}\lambda_A^2(1\cdots 1|q\up{A,1})\underbrace{\kb{0\cdots0}}_{\num_1(d)\text{ times}}\\
    &\otimes\kb{\mu(t,s,d,d^A,d^B,q^B,q\up{A,1})}_{AE},
  \end{align*}
  where $\ket{\mu(t,s,d,d^A,d^B,q^B,q\up{A,1})}_{AE} = $
  \begin{equation}\label{eq:result:mu}
    \sum_{q\up{A,0}}\lambda_A(0\cdots 0|q\up{A,0})\ket{q\up{A,0}}^{\tilde{M}}\ket{E^{t,s}_{\pi_d(q\up{A,0},q\up{A,1})q^B}}
  \end{equation}
  and where the sum is over all  $q\up{A,0}\in J(s,q^B,q\up{A,1},d) = $
  \begin{equation}\label{eq:result:Jset2}
    \{x\in\{0,1\}^{\num_0(d)} \st \max_j|s_j - \tau_j(\pi_d(x,q\up{A,1}), q^B)| \le \delta\}.
  \end{equation}
  Note that some of the $\ket{\mu(t,s,d,d^A,d^B,q^B,q\up{A,1})}$ vectors may be the zero vector.

  At this point, we are at step 3 of the protocol where parties set an ``abort'' flag if $s \not\in \mathcal{S}$ or if $\num_0(d) < n_0$.  Conditioned on not aborting, the system (we now combine everything again) collapses to the mixed state:
  \begin{align*}%\label{eq:result:}
    &\sigma\up{ok} = \frac{1}{p_{ok}}\sum_t\sum_{s\in\mathcal{S}}p(t,s)\kb{t,s} \sum_{d\st\num_0(d) \ge n_0}\kb{d}\\
                    &\otimes\sum_{\substack{d^A,d^B\\\st d=d^A\vee d^B}}\kb{d^A,d^B}\sum_{\substack{q^B\in\{0,1\}^n\\ q\up{A,1}\in\{0,1\}^{\num_1(d)}}}p(q\up{A,1},q^B,d^A,d^B)\\
    &\otimes\kb{0\cdots 0}\kb{\tilde{\mu}(t,s,d,d^A,d^B,q^B,q\up{A,1})},
  \end{align*}
  where the above scalars $p(q\up{A,1},q^B,d^A,d^B)$, can be easily derived, though their exact form is not important to the proof.  Furthermore, the state $\ket{\wt{\mu}(\cdots)}$ is the normalized version of $\ket{\mu(\cdots)}$ (from Equation \ref{eq:result:mu}).

  \textbf{Step 4, Final Entropy Bound:}
  Alice will now measure her non-discarded systems in the $Z$ basis resulting in her raw key (a register denoted $A_Z$).  From Equation \ref{eq:introduction:min-mix}, we have: $\Hmin(A_Z|ETSDD^AD^B)_{\sigma\up{ok}}$
  \begin{align}%\label{eq:result:}
       &\ge \min_{\substack{s\in S\\t\\q^B\in\{0,1\}^n\\d\in\{0,1\}^n \st \num_0(d) \ge n_0\\d^A,d^B\st d = d^A\vee d^B\\q\up{A,1}\in\{0,1\}^{\num_1(d)}}}\Hmin(A_Z|E)_{\widetilde{\mu}(t,s,d,d^A,d^B,q^B,q\up{A,1})}.
  \end{align}

  By Lemma \ref{lemma:introduction:min-sup}, we have, for every $V=(t,s,d,d^A,d^B,q^B,q\up{A,1})$, it holds that
  \begin{equation}%\label{eq:result:}
    \Hmin(A_Z|E)_{\widetilde{\mu}(V)} \ge \num_0(d)\cdot c - \log_2|J(s,q^B,q\up{A,1}, d)|,
  \end{equation}
  where $J(s,q^B,q\up{A,1}, d)$ is defined in Equation \ref{eq:result:Jset2}.
  Note this does not depend on the specific value of the individual $d^A$ and $d^B$, but instead only the joint value $d$ (which, ultimately, is the bit-wise OR of both individual values as discussed above).
  Above, we have $c = -\log_2\max_{i,j}|\braket{\wt{m}_i|j}|^2$ as described in the theorem statement.  This allows us to conclude:
  \begin{equation}%\label{eq:result:}
    \Hmin(A|_ZETSDD^AD^B) \ge \min_{c_0\ge n_0}\left(c_0\cdot c - \log\gamma(\samp, \mathcal{S}, c_0)\right).
  \end{equation}
  This completes the analysis of the ideal state, conditioned on not aborting the protocol.  By Lemma \ref{lemma:introduction:ideal-sec}, if we set the privacy amplification size to $\ell=\min_{c_0\ge n_0}\left( c_0\cdot c - \log\gamma(\samp, \mathcal{S}, c_0)\right) - 2\log\frac{1}{\sqrt{\epsilon^{cl}_\delta}}$, then the resulting secret key will be $5\sqrt{\epsilon^{cl}_\delta}$-secure according to Lemma \ref{lemma:introduction:ideal-sec}.  Of course, we must still take into account leakage due to error correction.  We may assume this is part of Eve's system, and suitably deduct from our min entropy bound above using the chain rule of min entropy \cite{renner2008security}.  This allows us to set the secret key size to $\ell = \min_{c_0\ge n_0}\left( c_0\cdot c - \log\gamma(\samp, \mathcal{S}, c_0)\right) - \leakEC - 2\log\frac{1}{\sqrt{\epsilon^{cl}_\delta}}$ as desired.
\end{proof}

We comment that, in practice, a correctness check can also be performed which would deduct an additional $\log\frac{1}{\epsilon_{cor}}$ bits from the final secret key \cite{tomamichel2012tight} where $\epsilon_{cor}$ is the desired, maximal, failure probability of error correction.  However we do not go into that detail here as it is a trivial addition to our main result above and does not deduct substantially from the final result.

We also comment that our requirement on the filtering measurements, Equation \ref{eq:result:filter-hyp}, may seem strong at first, however many practical data discarding techniques can be modeled by such a system.  For instance, protocols which involve Alice and Bob measuring in different orthonormal bases (those qubits not sampled in $t$), and based on the results, discarding outcomes.  %In particular, this is the case for the Extended B(2 protocol (CITE) which we analyze below.  Extending our theorem, above, to arbitrary filtering measurements is an interesting open problem.  One may also consider filtering measurements acting on both Alice and Bob which can be used to model filtering involving classical communication between parties (e.g., for CAD based protocols (CITE)).  We suspect our proof method, above, can be modified to handle this case, though we leave that, also, as future work.

%%% Local Variables:
%%% mode: latex
%%% TeX-master: "main"
%%% End:

%% Comment that E-B92 introduces interesting attributes that are worth studying for broader impact

\section{Application: Extended B92}\label{sec:application:b92}
As an application, we consider the so-called \emph{Extended B92} protocol, originally introduced in \cite{lucamarini2009robust}. The first, and to our knowledge only, finite key proof of this type of protocol was derived in \cite{amer2020finite}, which derived a finite key proof of a simplified version of the protocol (where only one Test state was used), but only for collective attacks.  To our knowledge there is no finite-key security proof assuming general, coherent attacks, for either the original protocol from \cite{lucamarini2009robust}, or the simplified one considered in \cite{amer2020finite}.  In this section, we use our Theorem \ref{thm:result:main} to analyze this protocol; we also compare to prior work, and show our result converges to the asymptotic upper bound in \cite{lucamarini2009robust}, while also giving better results than prior work in \cite{amer2020finite} for small signal sizes.

As discussed in Section \ref{sec:protocol:}, we actually analyze the ``toy'' protocol where, first, Bob will apply a filtering measurement $F^B_0 = \sqrt{M_0+M_1}$ and $F_1^B = \sqrt{M_?}$, rejecting those rounds which cause an observation of $F_1^B$, followed by taking the remaining systems and measuring in the $X$ basis.  Bob is then discarded from the resulting density operator.  This can be modeled in our framework of Section \ref{sec:result:}, by having Bob measure in any basis, at the end, and then tracing out his system.

It can be shown that this filtering operation satisfies the requirements of our Theorem \ref{thm:result:main}.  In particular, since $\ket{\bar{\phi_j}} = \sin\frac{\theta}{2}\ket{+} + (-1)^{1+j}\cos\frac{\theta}{2}\ket{-}$ (see Section \ref{sec:protocol:}), we have:
\begin{align*}%\label{eq:application:}
  &M_0 + M_1 = p\kb{\bar{\phi}_1} + p\kb{\bar{\phi}_0}\notag\\\notag\\
  &= pP\left(\sin\frac{\theta}{2}\ket{+} + \cos\frac{\theta}{2}\ket{-}\right)\notag\\
  &+ pP\left(\sin\frac{\theta}{2}\ket{+} - \cos\frac{\theta}{2}\ket{-}\right)\notag\\\notag\\
  & = 2p\left(\sin^2\frac{\theta}{2}\kb{+} + \cos^2\frac{\theta}{2}\kb{-}\right),
\end{align*}
where, recall, $P(\ket{z}) = \kb{z}$.
Thus:
\begin{align}%\label{eq:application:}
  F_0^B &= \sqrt{M_0 + M_1} =\sqrt{2p}\left( \sin\frac{\theta}{2}\kb{+} + \cos\frac{\theta}{2}\kb{-}\right).
\end{align}
This of course implies that $F_0^B\ket{x}^X = \lambda(0|x)\ket{x}^X$ for a scalar $\lambda(0|x)$.  It is easy to verify that $F_1^B = \sqrt{M_?} = \sqrt{I-M_0-M_1}$ also satisfies the theorem statement.

%Now, consider the actual protocol, where Bob measures using POVM $M_0, M_1, $ and $M_?$.  (EDIT).

Now that we have a filtering operation which correctly models the protocol, we next need a classical sampling strategy in order to employ our main result in Theorem \ref{thm:result:main}.  However, this is straight forward: Alice and Bob will choose subset $t$ and measure in the $X$ basis, reporting their outcomes and computing the number of errors between their observations.  Thus, $g(q_t) = w(q^A_t\oplus q^B_t)$ and $\tau(q_{-t}) = w(q^A_{-t}\oplus q^B_{-t})$ (there is only one guess/target function pair for this protocol).  We will assume $P_T$ chooses subsets of size $|t|=m<N/2$, uniformly at random from the $N$ rounds (where $m$ is fixed, and given by the user).  This sampling strategy was analyzed in \cite{yao2022quantum} and the error probability was found to be:
\begin{equation}\label{eq:application:cl-error}
  \epsilon^{cl}_\delta = 2\exp\left(\frac{-\delta^2m(n+m)}{m+n+2}\right).
\end{equation}

We set $\mathcal{S}$ to represent the maximum allowed $X$-basis noise that users will tolerate before aborting the protocol (maximal $w(q^A_t\oplus q^B_t)$).  Let $Q$ be this maximum allowed noise.  Also let $n_0$ be the minimum number of non-discarded rounds allowed by users before they abort.  We need to determine a bound on $\gamma(\samp, Q, c_0)$ (defined in Equation \ref{eq:gamma}).  Note that due to the structure of the target function, we can simplify this function to:
\begin{equation*}%\label{eq:application:}
\max_{\substack{s \le Q\\b\in\{0,1\}^n\\a\in\{0,1\}^{n-c_0}}}\log_2\left|\left\{q\in\{0,1\}^{c_0} \st |s - w((a||q) \oplus b)| \le \delta \right\}\right|,
\end{equation*}
where $a||q$ is the concatenation of strings $a$ and $q$.  The above simplification is due to the fact that we are maximizing over all possible $b$ and $a$ and that permuting bits within this particular target function in both coordinates will not alter it.

Fix $s, b,$ and $a$.  Note that $w((a||q)\oplus b) = \frac{1}{n}(wt(a\oplus b_L) + wt(q\oplus b_R))$, where $b_L$ is the left most $n-c_0$ bits of $b$ and $b_R$ is the right most $c_0$ bits.  By some manipulation and the well known bound on the volume of a Hamming ball, we can write this as:
\begin{align*}%\label{eq:application:}
  &\left|\left\{q\in\{0,1\}^{c_0} \st |s - w((a||q) \oplus b)| \le \delta \right\}\right|\notag\\
  &\le \left|\left\{q\in\{0,1\}^{c_0} \st w((a||q) \oplus b) \le s + \delta \right\}\right|\le\notag\\
  &\left|\left\{q\in\{0,1\}^{c_0} \st w(q\oplus b_R) \le \frac{n}{c_0}(s + \delta) - \frac{1}{c_0}wt(a \oplus b_L) \right\}\right|\\
  &\le 2^{c_0h(\frac{n}{c_0}(s + \delta)-\frac{1}{c_0}wt(a\oplus b_L))} \le 2^{c_0h(\frac{n}{c_0}(s +\delta))}
\end{align*}
From this, we conclude $\gamma(\samp,Q,c_0) \le c_0h(\frac{n}{c_0}(Q +\delta))$. Thus, using our Theorem \ref{thm:result:main}, we conclude the secret key size is:
\begin{equation*}\label{eq:application:fin-key}
\ell = \min_{c_0 \ge n_0}c_0\left(1 - h\left(\frac{n}{c_0}(Q+\delta)\right)\right) - \leakEC - 2\log\frac{1}{\sqrt{\epsilon_\delta^{cl}}}.
\end{equation*}
It is easy to see that the above is minimized when $c_0 = n_0$ (i.e., the smallest possible value for $c_0$ before parties abort).
%\begin{equation}%\label{eq:application:}
%\ell = n_0\left(1 - h\left(\frac{n}{n_0}(Q+\delta)\right)\right) - \leakEC - 2\log\frac{1}{\sqrt{\epsilon_\delta^{cl}}}.
%\end{equation}

The above expression is valid for any arbitrary quantum channel or attack.  To evaluate, however, we will assume depolarization noise - a common case in evaluating key-rates, and one which will allow us to readily compare our key-rate bound with prior work.  Such a channel maps a qubit density operator $\rho$ to $\mathcal{E}_Q(\rho) = (1-2Q)\rho + QI/2$, where $Q$ is the depolarizing parameter.  Keep in mind, however, our security result, Equation \ref{eq:application:fin-key}, works for any channel.  We use this depolarization channel, as is done in the majority of theoretical QKD research, in order to evaluate and compare to prior work when possible.

Using this, we see that the expected $X$ basis error rate will simply be $Q$, while the expected value of $\frac{n_0}{n}$ (i.e., the ratio of accepted rounds to total rounds), is readily found to be:
\begin{equation}%\label{eq:application:}
p_a = \frac{n_0}{n} = 4p\alpha^2\beta^2(1-2Q) + 2pQ,
\end{equation}
where we set $\alpha=\cos\frac{\theta}{2}$ and $\beta=\sin\frac{\theta}{2}$.  The above identity is easily found, by tracing the protocol state transmission through the depolarizing channel, and working out the probability that Bob observes $I-M_?$.

%\begin{align*}%\label{eq:application:}
%  &\frac{1}{2}tr\left(\left(M_0+M_1\right) \mathcal{E}_Q\left(\kb{\phi_0}\right)\right) + \frac{1}{2}tr\left(\left(M_0+M_1\right) \mathcal{E}_Q\left(\kb{\phi_1}\right)\right)\\
%  &= 4p\alpha^2\beta^2(1-2Q) + 2pQ.
%\end{align*}
For error correction, we will set $\leakEC = n_0h(Q_Z+\delta)$, where $Q_Z$ is the expected raw key error rate, which is readily seen to be (again, by tracing the protocol under this depolarizing noise map):  $Q_Z = pQ/p_a$.%  Under the above depolarization channel, this is easily found to be
%\begin{equation}%\label{eq:application:}
%Q_Z = \frac{pQ}{4p\alpha^2\beta^2(1-2Q) + 2pQ} = \frac{Q}{4\alpha^2\beta^2(1-2Q) + 2Q}
% \end{equation}

Let $\epsilon > 0$ be the desired security level, specified by the user (later, in our evaluations, we use $\epsilon = 10^{-12}$); then we set $\delta$ to be:
\begin{equation}%\label{eq:application:}
\delta = \sqrt{\frac{m+n+2}{m(m+n)}\ln\frac{50}{\epsilon^2}},
\end{equation}
in which case from Equation \ref{eq:application:cl-error} it will hold that $5\sqrt{\epsilon_\delta^{cl}} = \epsilon$ and our secret key will be $\epsilon$-secure by our Theorem \ref{thm:result:main} and Equation \ref{eq:introduction:ep-sec}.

Our results are shown in Figure \ref{fig:application:fig1}, with ideal ($x=1$) and practical ($x=\cos\frac{\theta}{2}$) devices.  We also compare with asymptotic results from \cite{lucamarini2009robust} and note our result converges to these asymptotic results in prior work. As we are the first, to our knowledge, to prove a finite key result for the full version of this protocol \emph{under general attacks}, we do not have other finite key evaluations to directly compare to.  In Figure \ref{fig:application:fig2}, we compare with finite key results from \cite{amer2020finite}, however it is difficult to make a direct comparison as that reference assumed weaker collective attacks and did not handle general attacks as we do (thus, key-rates from \cite{amer2020finite} may be artificially high).  That paper also utilized mismatched measurement bases which collect substantial information about an adversary's attack, which we did not consider here.  On the other, hand, the protocol in \cite{amer2020finite} was also simpler in the sense that only one of the two possible Test states were sent.  Thus, a direct comparison is difficult, however it is the closest finite key result to our work.

We see that our result outperforms this prior work at small signal count, while prior work outperforms, slightly, in a higher number of signals (though both results converge, asymptotically).  Whether our proof can be improved in higher signal counts, or if this is due to the fact that we are considering a stronger security model, remains an open question.

Finally, in Table \ref{tab:orgedcb091}, we evaluate the maximal tolerated error rates, for various $\theta$ and signal counts.

%The protocol analyzed in \cite{amer2020finite} was a varient of the extended B92 protocol which  only used one decoy state $\ket{+}$ (not $\ket{-}$) and only considered collective attacks.  This makes it difficult to compare the two results, as the simplification of the protocol may lower the key-rate (since $\ket{-}$ is never sent); on the other hand the fact thjat only collective attacks are analyzed will increase the rate (compared to ours) since that is a weaker attack model than the one we consider here.  This makes a useful comparison impossible.  We leave as future work whether or not our proof methods, introduced in this paper, can be used for the variant protocol analyzed in \cite{amer2020finite}.

\begin{figure}
  \centering
  \includegraphics[width=0.9\linewidth]{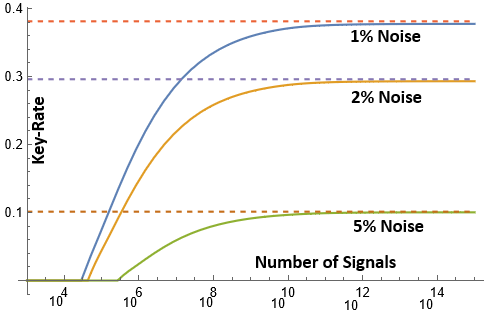}
  \includegraphics[width=0.9\linewidth]{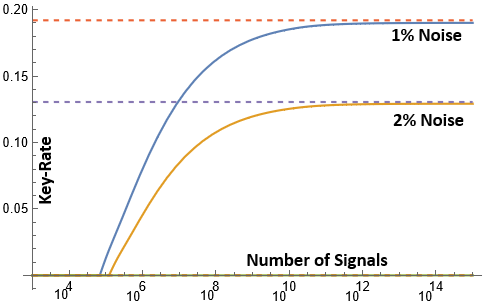}
  \caption{Evaluating our finite key-rate result $\ell/N$ (Solid), where $N$ is the total number of signals sent, and comparing to asymptotic results from \cite{lucamarini2009robust} (Dashed).  Top: Ideal devices ($x=1$ in the POVM); Bottom: practical devices ($x=\cos^2\frac{\theta}{2}$ for the measurement POVMs).  Note the difference in $y$-axis scale between the Top and Bottom graphs.  Here, we test $\theta = \pi/3$ for various noise levels, $Q$.  Similar results are found for other $\theta$, with decreasing key-rates as $\theta$ decreases (which is a property of this protocol \cite{lucamarini2009robust}).}\label{fig:application:fig1}
\end{figure}

%We can, however, compare our result asymptotically, to those results found in \cite{lucamarini2009robust}.  This comparison is shown in Figure (REF).  We see that our result converges, asymptotically to prior work.

\begin{figure}
  \centering
  \includegraphics[width=0.9\linewidth]{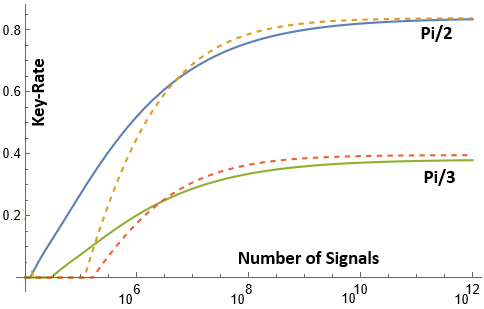}
  \caption{Comparing our new result (Solid) with finite key results in \cite{amer2020finite} (Dashed) for $\theta = \pi/2$ and $\theta = \pi/3$ with ideal measurement devices.  Here, the noise parameter is $Q=1\%$. We note our result gives better key-rates in lower signal counts; similar trends were found in the practical device setting and other noise levels.  See text for additional discussion on how to compare.}\label{fig:application:fig2}
\end{figure}

\begin{table}[h]
\caption{Showing the maximal tolerated error rate for various settings of $\theta$ and signal sizes $N$.  Note that, when $\theta=\pi/2$, the Key states $\ket{\phi_j}$ are actually orthogonal.  As $\theta$ decreases, the maximal tolerated error rate decreases, as shown also asymptotically in \cite{lucamarini2009robust}.}\label{tab:orgedcb091}
\centering
\begin{tabular}{|l|lll|}
\hline
Number of Signals & \(\theta = \pi/2\) & \(\theta = \pi/3\) & \(\theta = \pi/4\)\\[0pt]
\hline
\(N=10^5\) & \(Q=7.6\%\) & \(Q=3.7\%\) & \(Q=1.5\%\)\\[0pt]
\(N=10^8\) & \(Q=10.8\%\) & \(Q=6.8\%\) & \(Q=4.4\%\)\\[0pt]
\hline
\end{tabular}
\end{table}

%There, the asymptoptioc key-rate (as the number of signals approaches infinity), is found to be (using our notation here):
%\begin{equation}%\label{eq:application:}
%r_\infty = p_a (1 - h(Q/p_a) - h(Q_Z)),
%\end{equation}
%where $p_a = 4p\alpha^2\beta^2(1-2Q) + 2pQ$.  The result of this comparison is shown in Figures (REF) and (REF).  In particular, we see that our key-rate expression converges, asymptotically, to the key-rate derived in (CITE).

%Many intersting future problems remain open. For instance applying our work to alternative protocols involving data filtering and discarding (e.g., CAD \cite{maurer2002secret} or standard B92 \cite{bennett1992quantum}).  %We also suspect our proof methods can be useful in proving general entropic uncertainty relations, and leave that as interesting future work.

%%% Local Variables:
%%% mode: latex
%%% TeX-master: "main"
%%% End:

\section{Closing Remarks}%\label{sec:main:}

In this paper, we revisited the so-called Extended B92 QKD protocol, originally introduced in \cite{lucamarini2009robust}, and derived a new, rigorous, proof of security for this protocol in the finite-key scenario.  Our proof did not require any approximation methods to bound the quantum min entropy; we were also able to derive a fairly general result which may hold broader application to other, similar, quantum cryptographic protocols.

Many interesting open problems remain.  We only considered single qubit sources -- analyzing multi-qubit sources and photon loss would be highly beneficial.  We suspect our proof method can easily accommodate photon loss, by extending the underlying alphabet of the sampling strategy to include a third ``vacuum'' state, in addition to the $0$ and $1$ states currently.  One would also need to extend the sampling strategy to ``count'' the number of vacuum events, however this is readily done, as shown in \cite{yao2022quantum} for BB84.  Multi-photon sources, however, would prove more challenging, and we leave that as interesting future work.

Beyond the extended B92 protocol, it would also be interesting to apply our Theorem \ref{thm:result:main} to other QKD protocols (or other quantum cryptographic protocols).  One prime candidate is classical advantage distillation (CAD) \cite{maurer2002secret}, where parties must reject data based on the result of measurements and two-way classical communication.  One should be able to frame this in terms of filtering POVMs and use our theorem, there.  The challenge would be in analyzing the communication leakage, due to the two-way communication, however we feel our method may be suitably adapted to this scenario.

\balance
\bibliographystyle{unsrt}
\bibliography{local}

\end{document}